\def\eqref#1{equation~\ref{#1}}
\def\1{\bm{1}}
\DeclareMathAlphabet{\mathsfit}{\encodingdefault}{\sfdefault}{m}{sl}
\SetMathAlphabet{\mathsfit}{bold}{\encodingdefault}{\sfdefault}{bx}{n}
\newtheorem{definition}{Definition}
\newtheorem{lemma}{Lemma}
\newtheorem{theorem}{Theorem}
\theoremstyle{remark}
\newtheorem{remark}{Remark}
\title{Anatomy-DT: A Cross-Diffusion Digital Twin for
Anatomical Evolution}
\author{Moinak Bhattacharya$^1$, Gagandeep Singh$^2$ \& Prateek Prasanna$^1$\\
$^1$Stony Brook University, $^2$ Columbia University\\
\texttt{\{moinak.bhattacharya,prateek.prasanna\}@stonybrook.edu}\\
}
\newcommand{\cmark}{\ding{51}} 
\newcommand{\xmark}{\ding{55}} 
\begin{document}

\maketitle

\begin{abstract}
Accurately modeling the spatiotemporal evolution of tumor morphology from baseline imaging is a pre-requisite for developing digital twin frameworks that can simulate disease progression and treatment response. Most existing approaches primarily characterize tumor growth while neglecting the concomitant alterations in adjacent anatomical structures. In reality, tumor evolution is highly non-linear and heterogeneous, shaped not only by therapeutic interventions but also by its spatial context and interaction with neighboring tissues. Therefore, it is critical to model tumor progression in conjunction with surrounding anatomy to obtain a comprehensive and clinically relevant understanding of disease dynamics. We introduce a mathematically grounded framework that unites mechanistic partial differential equations (PDEs) with differentiable deep learning. Anatomy is represented as a multi-class probability field on the simplex and evolved by a cross-diffusion reaction--diffusion system that enforces inter-class competition and exclusivity. A differentiable implicit--explicit (IMEX) scheme treats stiff diffusion implicitly while handling nonlinear reaction and event terms explicitly, followed by projection back to the simplex. To further enhance global plausibility, we introduce a topology regularizer that simultaneously enforces centerline preservation and penalizes region overlaps. The approach is validated on synthetic datasets (Voronoi, Vessel) and a clinical dataset (UCSF-ALPTDG brain glioma). On synthetic benchmarks, our method achieves state-of-the-art accuracy (e.g., Voronoi-DSC: $95.70\pm0.30$ and Vessel-DSC: $71.14\pm0.25$) while preserving topology, and also demonstrates superior performance on the clinical dataset (UCSF-DSC: $65.37\pm0.35$). By integrating PDE dynamics, topology-aware regularization, and differentiable solvers, this work establishes a principled path toward anatomy-to-anatomy generation for digital twins that are visually realistic, anatomically exclusive, and topologically consistent. Code will be made available upon acceptance. 
\end{abstract}

\section{Introduction}

Modeling the temporal evolution of anatomical structures is a central challenge in 
computational oncology and medical image analysis~\citep{ren2022local,lachinov2023learning}. Clinical imaging protocols routinely capture patient scans before and after treatment, or during disease progression, yet predictive models that generate plausible future anatomy from a baseline scan remain limited. The main reasons being scarce training datasets and lack in the existing mechanisms to enforce temporal consistency and anatomical plausibility in biological growth dynamics.
There is a need to develop computational surrogates for individual patients that simulate trajectories of disease growth and therapeutic response, thereby informing personalized treatment planning and clinical decision support \citep{katsoulakis2024digital,kuang2024med}. 

A potential solution to predicting growth trajectories is through generative AI approaches. Most existing generative modeling techniques in medical imaging, such as conditional generative adversarial networks (GANs) and diffusion models \citep{ho2020denoising,dhariwal2021diffusion,armanious2020medgan}, operate primarily 
generate images from gaussian noise without considering any treatment paradigms as input. 
While these methods produce visually realistic outputs, they often fail to guarantee structural
plausibility. To ensure structure correctness, several methods have been proposed that use conditioning mechanisms to generate images~\citep{zhang2023adding,zhao2023uni}. Using anatomies as control has been shown to improve medical image generation~\citep{bhattacharya2024radgazegen,bhattacharya2025brainmrdiff}. 
Recent methods can generate anatomically accurate post-treatment images from pre-treatment scans~\citep{bhattacharya2025immunodiff,liu2025treatment} when conditioned with patient demographics, genomic markers, etc. The major limitations of these methods are that they are static in nature and cannot handle heterogenous inputs i.e., pre-treatment, pre-operative, post-operative scans, varying treatment schemes, genomic mutations.
\begin{figure}
    \centering
    \includegraphics[width=0.8\linewidth]{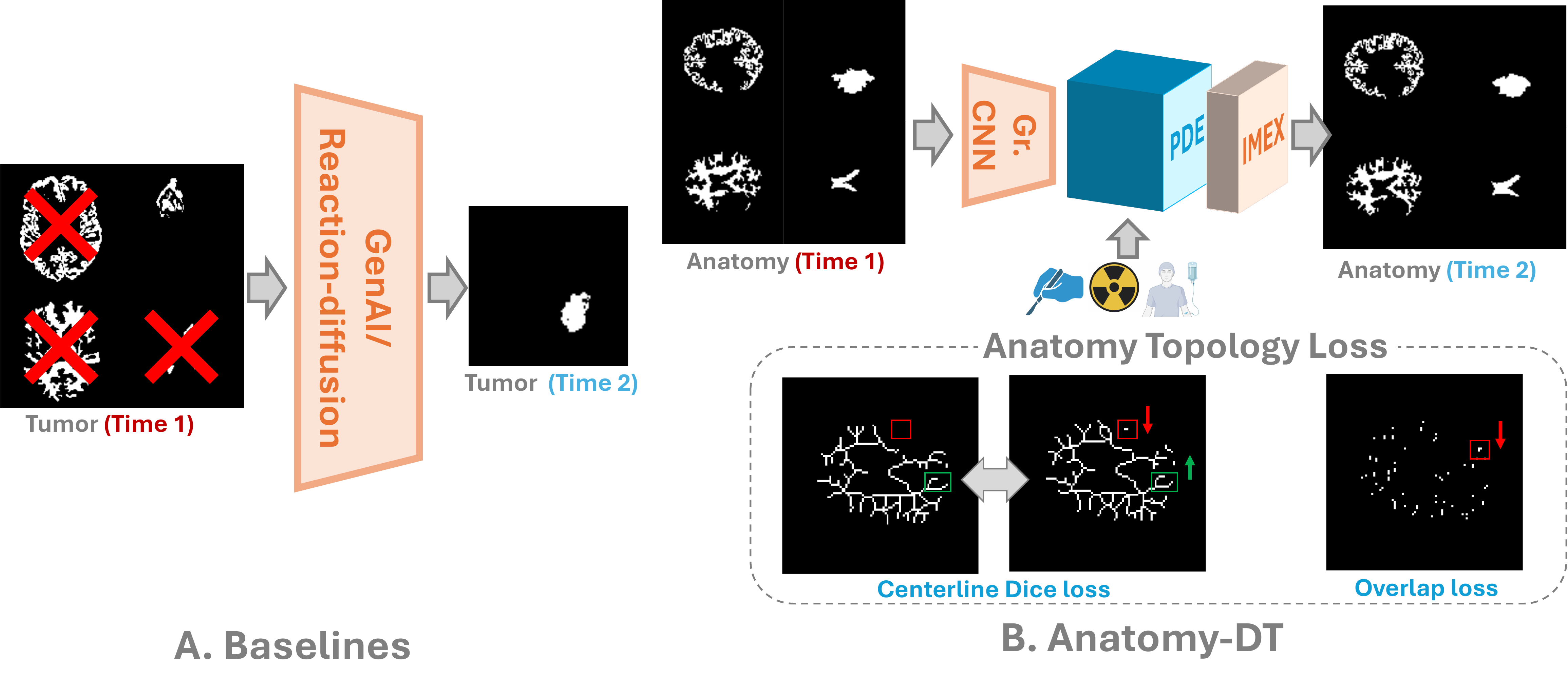}
    \caption{Existing baselines model only tumor growth and do not consider adjacent anatomy changes. Our proposed method, Anatomy-DT, models both anatomy and tumor growth conditioned on different treatment paradigms.}
    \label{fig:teaser}
\end{figure}
In contrast, mechanistic models of tumor growth and tissue dynamics, often formulated as reaction--diffusion partial differential equations (PDEs)~\citep{martens2022deep,yin2019review,metzcar2024review}, provide interpretable dynamics grounded in biology. These models capture important features such as infiltration along 
tissue interfaces or mass effect on surrounding tissue. However, they are typically restricted to single-class tumor representations, lack explicit coupling with surrounding anatomy, and cannot be easily coupled with deep learning frameworks
for end-to-end training. Moreover, they rarely account for global topological invariants, such as the connectivity of ventricles or the tree structure of vessels, which are critical to ensure clinically valid anatomical trajectories. Recently proposed works (\citep{bhattacharya2025brainmrdiff} and \cite{gupta2024topodiffusionnet}) take topological constraints into account to ensure topological consistency in the generated images\citep{bhattacharya2025brainmrdiff,gupta2024topodiffusionnet}. With the emergence of digital twins, mechanistic models conditioned on diverse treatment modalities have re-emerged as an important paradigm. Clinically, however, these treatments exert influence not only on the tumor itself but also on adjacent anatomical structures. Capturing the coupled evolution of multiple anatomies is particularly challenging, as it requires maintaining both spatial coherence and topological consistency across interacting tissues. Topology-constrained mechanistic models provide a principled framework to address this challenge, ensuring preservation of anatomical topology while enabling clinically faithful simulations.


In this work, we introduce a novel framework, Anatomy-DT, an anatomy digital twin 
that bridges the gap between mechanistic PDEs and data-driven generative models. Our approach represents anatomy as a multi-class probability field \(u(x,t) \in \Delta^{K-1}\), where \(K\) denotes the number of tissue classes and \(\Delta^{K-1}\) is the probability simplex. The temporal evolution of this field is governed by a \emph{cross-diffusion reaction--diffusion system}, which naturally enforces competition between classes and exclusivity of tissue regions. Concretely, the governing PDE is formulated as
\[
\partial_t u_i(x,t) \;=\; \nabla \cdot \Bigg( \sum_{j=1}^K D_{ij}(u) \, \nabla u_j(x,t) \Bigg) \;+\; R_i(u)
\]\\
where \(D_{ij}(u)\) encodes both self-diffusion (\(i=j\)) and cross-diffusion (\(i \neq j\)) coefficients, and \(R_i(u)\) represents local reaction or growth terms (e.g., proliferation, atrophy, or event-driven changes). Such systems generalize classical reaction--diffusion equations~\cite{murray2001mathematical,turing1990chemical,yankeelov2015toward,jarrett2018mathematical} and have been studied extensively in multi-species dynamics~\cite{shigesada1979spatial,chen2021rigorous,burger2020reaction}. In the anatomical setting, this structure enforces sharp, mutually exclusive tissue boundaries without resorting to ad-hoc normalization.  


To achieve stable numerical integration within deep learning pipelines, we develop a differentiable implicit--explicit (IMEX) scheme~\cite{ascher1995implicit,hundsdorfer2013numerical} that treats the stiff diffusion operator implicitly, while handling the nonlinear reaction and event terms explicitly. After each update, the solution is projected back onto the simplex, ensuring tissue probability conservation and anatomical plausibility. This design yields a numerically stable PDE layer that is fully compatible with backpropagation and scalable to medical images.  

To further guarantee anatomical plausibility beyond local PDE dynamics, we incorporate global topological priors. Anatomical structures are inherently complex, and their organization becomes further disrupted during tumor growth owing to processes such as angiogenesis, stromal remodeling, extracellular matrix degradation, edema formation, and mass effect, which collectively alter both local microarchitecture and global tissue geometry. In this work, we focus on modeling how anatomical structures deform in response to tumor growth. We propose a formulation that enforces topological correctness by combining centerline consistency for complex anatomical structures with a no-overlap constraint across different anatomies. 
Topological regularization has been successfully applied to image analysis and generative models~\cite{clough2020topological,hofer2019connectivity}, but to our knowledge, this is the first integration with a cross-diffusion PDE generative layer. Taken together, the combination of cross-diffusion PDE dynamics, differentiable IMEX integration, and topology-aware regularization constitutes a principled mathematical formulation for generative anatomy-to-anatomy modeling.

We 
experiment
on two synthetic datasets, demonstrating Anatomy-DT's ability to reproduce both local tissue proliferation
and global topology preservation. Beyond synthetic validation, the framework can be directly applied to clinical datasets such as brain MRIs, where baseline masks are evolved into follow-up masks conditioned on treatment variables. By unifying mechanistic priors with differentiable generative modeling, 
our method lays the foundation for digital twins that are not only visually realistic, but also anatomically and topologically consistent.

Our contributions are as follows: a) We propose Anatomy-DT, a cross-diffusion reaction–diffusion PDE on the simplex for anatomy-to-anatomy generation, introducing inter-class proliferation as a novel generative mechanism, b) We design a differentiable IMEX solver that implicitly handles stiff diffusion, explicitly treats nonlinear terms, and projects onto the simplex to ensure stability, probability conservation, and anatomical plausibility within end-to-end training, c)We introduce topology-preserving regularization by enforcing centerline consistency and inter-anatomy no-overlap, providing structural guarantees absent in standard generative models and d) We demonstrate the effectiveness of the proposed method on a clinical dataset for post-treatment tumor and multiple-anatomy evolution prediction.

Together, these contributions lead to a tumor and multi-anatomy growth modeling paradigm constrained on different treatment types.
To the best of our knowledge, this is the first approach that integrates PDE dynamics and topological constraints into digital twin frameworks.

\section{Methods}
\begin{figure}
    \centering
    \includegraphics[width=1\linewidth]{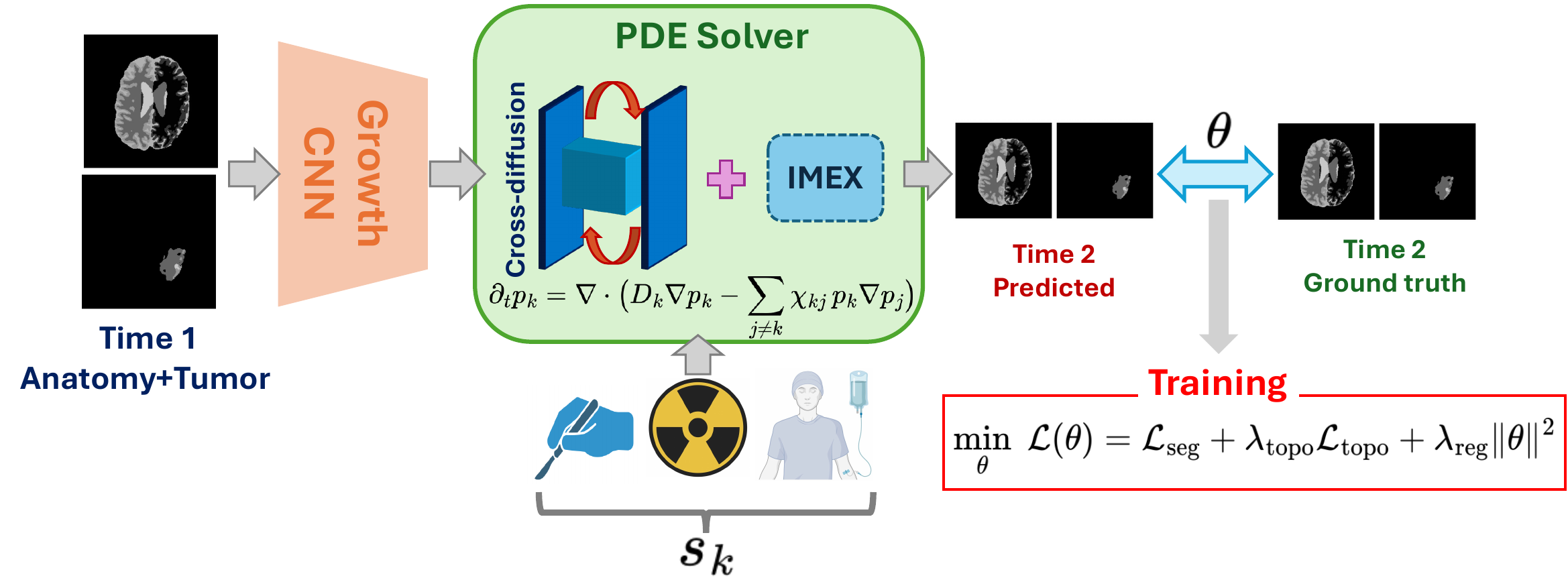}
    \caption{\textbf{Anatomy-DT architecture.} Our proposed method has three primary components: a) A Growth CNN that learns the residual patterns of anatomy growth, b) the cross-diffusion PDE which models multi-anatomy evolution and c) a topology loss function preserving the anatomical structures. 
    }
    \label{fig:architecture}
\end{figure}
\subsection{Background}
Diffusion-based PDEs are widely used to model spatio-temporal processes in biology particularly in tumor growth, spread of signaling molecules, etc.  In biological application, these diffusion PDEs are used in conjunction with reaction representation such as proliferation, chemotherapy kill, etc.~\citep{yin2019review}. These systems are termed as \textit{Reaction Diffusion PDEs}. However, such methods typically focus on modeling the growth of a single class, such as a tumor, while neglecting the concurrent structural and functional changes in surrounding anatomies. Here, we provide a brief background of different diffusion methods for multi-class setting ($k$).\\
\textbf{Self-diffusion.} For class $k$, the term $\nabla \cdot (D_k(x)\nabla p_k)$ models independent spreading, such as edema expansion along white-matter tracts, where $D_k(x)$ denotes a spatially varying diffusivity and $p_k$ the probability or density of class $k$. This method smooths local concentrations of different classes but does not enforce interactions between them~\citep{swanson2000quantitative}.\\
\textbf{Cross-diffusion.} The term $-\nabla \cdot (\chi_{kj}(x)\, p_k\nabla p_j)$ encodes inter-class interactions in which the spatial gradients of class $i$ influence the diffusion of class $j$, where ${i, j}\subset k$. This method enforces anatomical exclusivity by avoiding inter-class diffusion~\citep{vanag2009cross}.\\ 
\textbf{Treatment terms.} In tumor growth modeling, in addition to the diffusion terms, there are also reaction terms like proliferation represented as $r_k$ and discrete interventions such as surgery or radiotherapy represented as $s_k$. The reaction diffusion model can be represented as 
\begin{equation}
    \frac{\partial p_k}{\partial t} \;=\; \nabla \cdot \big(D_k(x)\nabla p_k\big) \;+\; r_k(p_k) \;+\; s_k.
\end{equation} where $p_k$ is the probability of each class.
\subsection{State Representation}
In this work, we propose a system that models tumor and anatomy growth. We first define how anatomical structures ($\mathbb{A}_k$) are represented over time $t \in [0,T]$. Instead of using discrete segmentation masks, which are non-differentiable and cannot capture uncertainty, we adopt a probabilistic formulation where each pixel/voxel $x$ is described as a distribution over anatomical classes. This formulation is differentiable and captures both relatively stable anatomical structures and evolving pathologies like tumors.

\begin{definition}[Anatomical State]
We define the anatomical state at time $t$ as a multi-class probability field
\[
p : \Omega \times [0,T] \to \Delta^{K-1}, \qquad 
p(x,t) = (p_1(x,t), \ldots, p_K(x,t)),
\]
where $\Omega \subset \mathbb{R}^D$ denotes the image domain and $\Delta^{K-1}$ is the $(K-1)$-simplex. 
\end{definition}

\begin{remark}
The simplex constraint,
\[
p_k(x,t) \ge 0, \quad \sum_{k=1}^K p_k(x,t) = 1,
\]
ensures mutual exclusivity of tissue classes. This means that even though the model operates in a relaxed probability space, every pixel is assigned to exactly one dominant class at inference. 
Such a design allows us to reason jointly about multiple tissues while maintaining anatomical plausibility.
\end{remark}

\subsection{Anatomy Digital Twin}
Anatomy-DT models anatomy and tumor growth across different timepoints given that the patient is subjected to certain treatments. Having established the representation in the previous sub-section, we now specify the model governing temporal evolution of the different anatomies and tumors. 
We build on cross-diffusion reaction--diffusion systems, which are widely used in biology and ecology to capture competition between different interacting categories. In this particular context, these models are useful as they not only capture the independent proliferation of each class, but also captures spatial competition and exclusivity between different classes. To enhance expressive power, we further integrate a Growth CNN, which learns residual corrections to the PDE dynamics and better aligns simulated trajectories with observed imaging data.\\
In conjunction with cross-diffusion and proliferation terms, we propose to incorporate clinical interventions, such as surgical resections, radiotherapy, or chemotherapy (in any combination). To this end, we augment the cross–diffusion reaction–diffusion system with the per-class treatment term $s_k$. These terms allow the framework to capture both the natural dynamics of tissue competition and the discontinuities introduced by medical interventions, resulting in a more faithful representation of patient-specific tumor–anatomy evolution.

\begin{definition}[Cross-Diffusion PDE model]
The dynamics of $p(x,t)$ are governed by a multi-class cross-diffusion reaction--diffusion system. For each class $k$, the PDE is represented as:
\begin{equation}
\partial_t p_k(x,t) 
= \nabla \cdot \Big(D_k(x) \nabla p_k(x,t) - \sum_{j \neq k} \chi_{kj}(x)\, p_k(x,t) \nabla p_j(x,t)\Big) 
+ r_k(p; C,U,x,t) + s_k(x,t).
\label{eq:pde}
\end{equation}
\end{definition}

\begin{remark}
Here, $D_k(x)$ encodes tissue-specific anisotropic diffusion, $\chi_{kj}(x)$ regulates how classes compete for space, $r_k$ models growth or atrophy, and $s_k$ captures interventions such as surgery or radiation. 
This general formulation integrates continuous biophysical dynamics with discrete medical events, making it well suited for constructing patient-specific digital twins.
\end{remark}
For the tumor class, $ r_{\text{tumor}} = \alpha(x)\, p_{\text{tumor}} \Big(1 - \tfrac{p_{\text{tumor}}}{\kappa(x)} \Big)$, where $\alpha(x)$ is the local growth rate and $\kappa(x)$ is the carrying capacity.  This classical logistic growth term captures exponential proliferation at low density and saturation at high density, reflecting biological growth limits.

\subsection{Numerical Integration via IMEX Scheme}
Directly integrating \eqref{eq:pde} is numerically unstable due to stiffness in the diffusion terms. 
To address this, we design a differentiable solver based on an implicit--explicit (IMEX) scheme. 
This splitting stabilizes integration while preserving sharp intervention effects such as resections. We first decompose the PDE into stiff and non-stiff parts $\partial_t p = F_{\text{stiff}}(p) + F_{\text{nonstiff}}(p)$,
with diffusion and cross-diffusion in $F_{\text{stiff}}$, and reaction and intervention terms in $F_{\text{nonstiff}}$. Then we perform IMEX Time Discretization. The scheme is given by $\frac{p^{n+1} - p^n}{\Delta t} 
= F_{\text{stiff}}(p^{n+1}) + F_{\text{nonstiff}}(p^n)$. Implicit updates require solving $(I - \Delta t D_k \nabla^2) p_k^{n+1} = \mathrm{rhs}$,
which we approximate with $m$ Jacobi iterations. Unrolling these iterations yields a solver compatible with backpropagation, enabling gradients to flow through temporal dynamics. We observe that the IMEX scheme provides unconditional stability with respect to step size for the diffusion terms, while explicit updates preserve the discontinuities induced by clinical interventions. Projection back to the simplex ensures feasibility of the anatomical state throughout training.
\subsection{Topology-Preserving Regularization}
While PDE dynamics ensure plausible growth and diffusion, they do not guarantee preservation of
global anatomical topology. To prevent unrealistic predictions, such as fragmented white matter
tracts or disconnected cortical gray matter, we introduce a topology-aware regularization based on centerline dice loss~\cite{shi2024centerline} and no-overlap clause~\cite{nandanwar2018overlap}.
\begin{definition}[Anatomy Structure Regularizer]
For selected classes $k \in \mathbb{A}_k$, we preserve topological consistency via the cl-Dice loss between the predicted soft mask $p_k(x,t)$ and ground-truth $q_k(x)$. Let $S(\cdot)$ denote a soft skeletonization operator. The predicted ($\mathbb{X}$) and ground-truth ($\mathbb{Y}$) soft skeletons are represented as:
\[
\mathrm{\mathbb{X}}(p_k,q_k)=\frac{\langle S(p_k),\, q_k\rangle}{\langle S(p_k),\,\mathbf{1}\rangle+\varepsilon}, 
\qquad
\mathrm{\mathbb{Y}}(p_k,q_k)=\frac{\langle S(q_k),\, p_k\rangle}{\langle S(q_k),\,\mathbf{1}\rangle+\varepsilon},
\]
and the clDice loss
\[
\mathcal{L}_{\mathrm{clDice}}(p_k,q_k)
=1-\frac{2\,\mathrm{\mathbb{X}}(p_k,q_k)\,\mathrm{\mathbb{Y}}(p_k,q_k)}{\mathrm{\mathbb{X}}(p_k,q_k)+\mathrm{\mathbb{Y}}(p_k,q_k)+\varepsilon}.
\]
\end{definition}
\begin{definition}[Anatomy Overlap Regularizer]
To encourage mutual exclusivity among classes, we penalize pairwise overlaps of the soft probabilities:
\[
\mathcal{L}_{\mathrm{overlap}}(p)
=\frac{2}{\mathbb{A}(\mathbb{A}-1)}\sum_{1\le i<j\le K}\,\mathbb{E}_{x}\big[p_i(x,t)\,p_j(x,t)\big].
\]
\end{definition} where $\mathbb{A}$ is the total number of anatomical classes.
\paragraph{Anatomy Topology Loss.}
Combining Definitions 4 and 5, we present a combined anatomy topology loss (ATL) function. We use a weighted sum of structure and exclusivity regularizers:
\[
\mathcal{L}_{\mathrm{ATL}}(p)
=\sum_{k\in\mathcal{K}_{\mathrm{cl}}}\lambda_1\,\mathcal{L}_{\mathrm{clDice}}(p_k,q_k)
\;+\;\lambda_2\,\mathcal{L}_{\mathrm{overlap}}(p).
 \]




\subsection{Training}
For training, we use a combination of Dice loss and Anatomy Topology Loss (ATL). The Dice loss ensures accurate overlap between predicted and ground-truth anatomical and tumor regions, while ATL enforces topological correctness by preserving the structural integrity of anatomical compartments. The final loss is represented as:
\begin{equation}
    \min_\theta \; \mathcal{L}(\theta) =\underbrace{\sum_{k\in\mathbb{A}}\mathcal{L}_{\text{seg}}(p^{\text{pred}}, p^{\text{gt}})}_{\text{Segmentation loss}} + \underbrace{\lambda_{\text{topo}} \sum_{k \in \mathbb{A}} \mathcal{L}_{\text{ATL}}(p_k)}_{\text{Topology loss}} + \lambda_{\text{reg}} \|\theta\|^2
\end{equation}.

\section{Experiments and Results}
\begin{figure}
    \centering
    \includegraphics[width=1\linewidth]{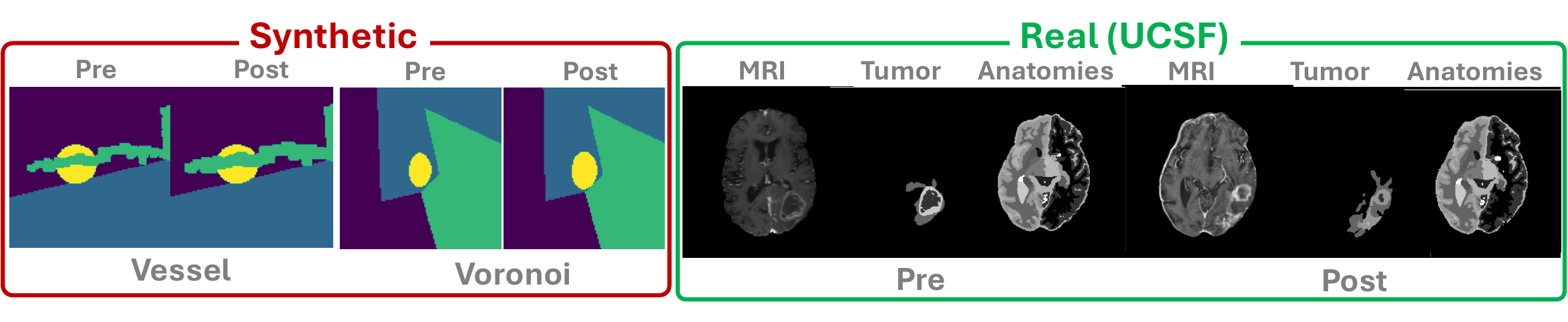}
    \caption{\textbf{Datasets.} We use two synthetic datasets and a real clinical dataset.}
    \label{fig:datasets}
\end{figure}
\subsection{Experimental Setup}
\textbf{Datasets.} We first validate our approach on two synthetic benchmarks designed for stress-testing: a Voronoi-based dataset and a vessel-tree dataset (Figure~\ref{fig:datasets}). These controlled settings allow us to systematically evaluate stability and robustness. Additional implementation and generation details are provided in the Appendix~\ref{sec:appendix_datasets}. We further evaluate our framework on a clinical dataset, UCSF-ALPTDG dataset~\cite{fields2024university}, which provides multi-timepoint imaging of adult brain gliomas.\\
\textbf{Baselines.} We compare our method against two broad categories of approaches. The first includes deep learning–based models such as U-Net~\citep{ronneberger2015u}, ConvLSTM~\citep{shi2015convolutional}, and NeuralODE~\cite{chen2018neural}, which represent widely used architectures for image segmentation and temporal modeling. The second comprises PDE-driven formulations, namely Fisher-KPP~\citep{fisher1937wave,kolmogorov1937etude}, and Cross-diffusion~\citep{vanag2009cross} models, which provide interpretable mechanistic baselines rooted in tumor growth dynamics. All these models were tuned to take pre-treatment multi-anatomy and tumor labels as input along with the treatment variables and to predict the subsequent changes in both anatomical structures and tumor regions in the post-treatment images. This comparison allows us to evaluate performance relative to both data-driven black-box methods and theory-guided physics-based models.\\
\textbf{Evaluation metrics.} For evaluation of the generated tumor masks, we use Dice-Sørensen Coefficient (DSC) and Hausdorff Distance-95th percentile (HD95). For all experimentations, we use a 5-fold cross-validation strategy and report the
average performance on 5 folds. 
\begin{table}[t]
\centering
\caption{\textbf{Post-treatment anatomy and tumor mask prediction}. We compare our method with different DL and PDE-based baselines. DSC and HD95 Metrics reported as $\mu\pm\sigma$. Best results in \textbf{bold} and second best in \textit{underline}. 
}
\scalebox{0.78}{
\begin{tabular}{lcccccc}
\hline
& \multicolumn{2}{c}{\textbf{Voronoi}} & \multicolumn{2}{c}{\textbf{Vessel}} & \multicolumn{2}{c}{\textbf{UCSF}} \\
\hline
\textbf{Method} & \textbf{DSC} & \textbf{HD95} & \textbf{DSC} & \textbf{HD95} & \textbf{DSC} & \textbf{HD95} \\
\hline\hline
UNet             & 66.08 $\pm$ 9.82 & 29.48 $\pm$ 7.67  & 60.31 $\pm$ 2.16 & 36.70 $\pm$ 2.84 & 63.57 $\pm$ 2.29 & 17.87 $\pm$ 6.10\\
ConvLSTM         & 69.86 $\pm$ 3.96 & 26.16 $\pm$ 2.38  & 59.62 $\pm$ 4.09 & 37.11 $\pm$ 3.22 & \underline{64.60 $\pm$ 2.12}  & \underline{14.29 $\pm$ 4.32} \\
NeuralODE        & 69.01 $\pm$ 11.43 & 28.18 $\pm$ 6.73  & 62.90 $\pm$ 3.40 & 33.98 $\pm$ 4.12 & 58.00 $\pm$ 9.20 & 27.12 $\pm$ 19.88  \\
\hline
Fisher-KPP       & \underline{71.57 $\pm$ 0.34} & \underline{38.11 $\pm$ 0.30}  & \underline{69.04 $\pm$ 0.68} & \underline{27.50 $\pm$ 1.17} & 59.88 $\pm$ 1.36 & 16.35 $\pm$ 3.35\\
Cross-diffusion  & 71.53 $\pm$ 0.35 & 38.11 $\pm$ 0.29  & 66.32 $\pm$ 0.94 & 37.63 $\pm$ 1.65 & 63.22 $\pm$ 1.96  & 10.49 $\pm$ 1.98  \\
\hline
Ours & \textbf{95.70 $\pm$ 0.30} & \textbf{1.56 $\pm$ 0.14} & \textbf{71.14 $\pm$ 0.25} & \textbf{19.35 $\pm$ 1.12} & \textbf{65.37 $\pm$ 0.35} & \textbf{10.22 $\pm$ 0.67} \\
\hline
\end{tabular}}
\label{tab:quantitative}
\end{table}
\begin{figure}[t]
    \centering
    \includegraphics[width=1\linewidth]{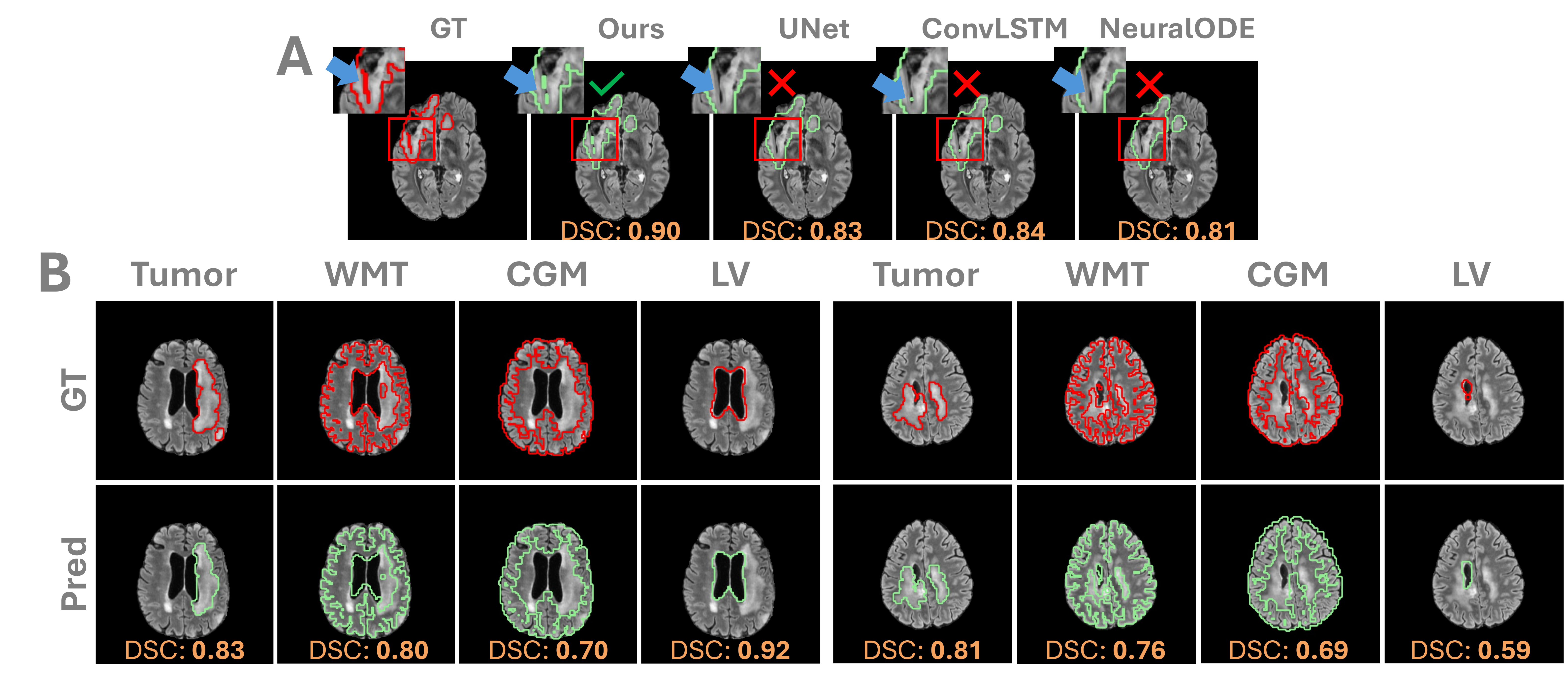}
    \caption{\textbf{Qualitative results.} A. We compare the tumor segmentation masks from our generated method with different baselines. We report the DSC scores for each cases. B. We show the ground-truth (in \textcolor{red}{red}) and predicted (in \textcolor{green}{green}) tumor, white matter tracts, cortical gray matter, and lateral ventricle segmentation contours. We report the DSC score for each structure.}
    \label{fig:qualitative}
\end{figure}
\subsection{Clinical Applications.}
\textbf{Quantitative analysis.} In Table~\ref{tab:quantitative}, we report DSC and HD95 scores for anatomy and tumor mask segmentation tasks. U-Net and ConvLSTM reached DSCs of $63.57 \pm 2.29$ and $64.60 \pm 2.12$ with HD95 of $17.87 \pm 6.10$ and $14.29 \pm 4.32$, respectively, while NeuralODE performed worse ($58.00 \pm 9.20$, $27.12 \pm 19.88$). Fisher--KPP achieved a DSC $\approx 59.88 \pm 21.36$ with HD95 $16.35 \pm 3.35$, and cross-diffusion acheived ($63.22 \pm 1.96$, $10.49 \pm 1.98$). Our method surpassed all baselines with the highest DSC ($65.37 \pm 0.35$) and lowest HD95 ($10.22 \pm 0.67$), demonstrating both accuracy and stability. Clinically, improved boundary precision reduces uncertainty in treatment planning, while the lower variance indicates robustness that is crucial for consistent deployment across heterogeneous patient populations. Additionally, in Figure~\ref{fig:qualitative3}.A, we compare the DSC of the proposed method with different baselines for different anatomies. In Figure~\ref{fig:qualitative3}.B, we report cl-Dice for computing the topological accuracy of the generated anatomical structures. 
We observe that Anatomy-DT achieved higher DSC and cl-Dice for both anatomies compared to the baselines. In summary, we conclude that our proposed method achieves the best performance on both combined segmentation and individual anatomy segmentation tasks.\\
\begin{figure}
    \centering
    \includegraphics[width=0.8\linewidth]{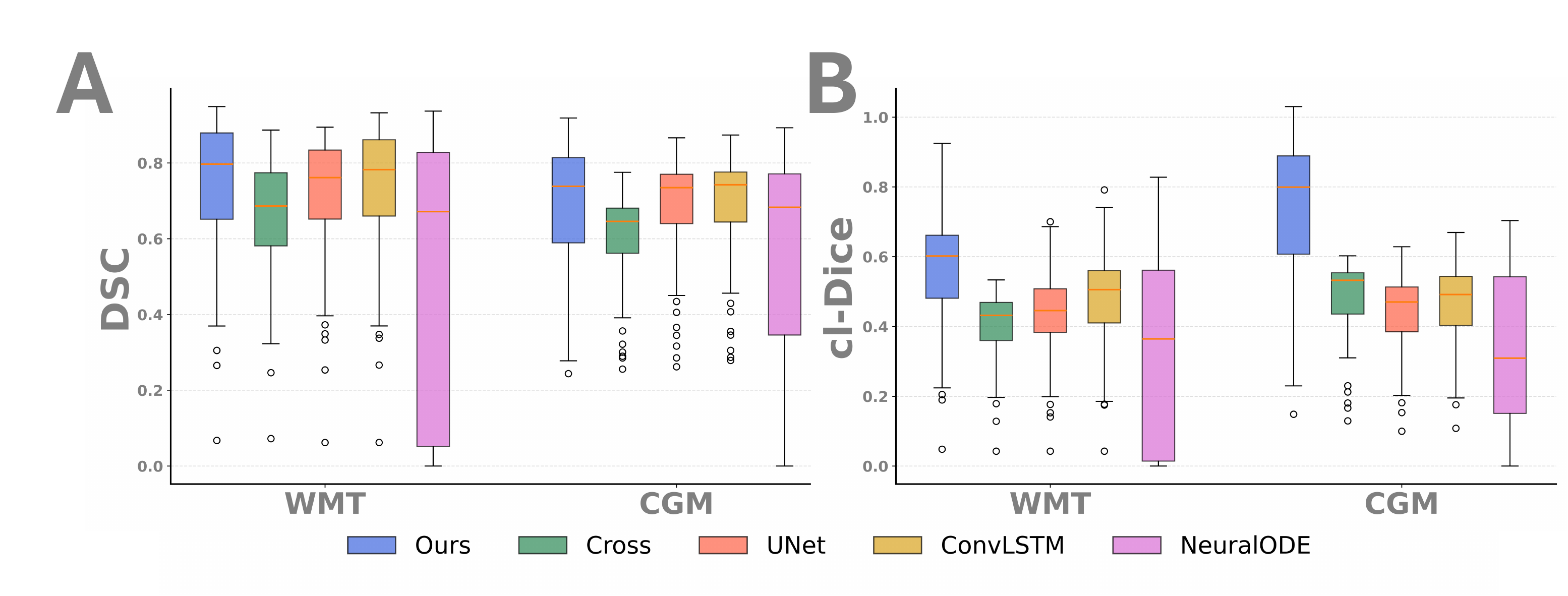}
    \caption{\textbf{Results on different structures.} We show box plots for DSC and cl-Dice scores of different structures (WMT and CGM). }
    \label{fig:qualitative3}
\end{figure}
\textbf{Qualitative analysis.} In Figure~\ref{fig:qualitative}.B, we show the ground truth and predicted tumor and anatomy masks for four different patients overlaid on top of the FLAIR sequence. We observe that our proposed method consistently achieved high DSC across different tumor and anatomies. In Figure~\ref{fig:qualitative}.A, we compare the generated tumor and different anatomies from our method with different baselines like UNet, ConvLSTM, and NeuralODE. We provide zoomed-in views of the critial anatomy regions. We observe that our method accurately captures more granular patterns compared to the baselines.
\subsection{Toy datasets }
\textbf{Quantitative analysis.} Table~\ref{tab:quantitative} reports segmentation performance on the toy Voronoi and Vessel datasets. On the Voronoi dataset, our method achieves a DSC of 0.9570$\pm$0.0030 and HD95 of 1.56$\pm$0.14, a $\sim$34\% gain in overlap and $\sim$94\% reduction in boundary error compared to the best baseline. PDE-only models (Fisher, Cross-diffusion) show consistent DSC of around 0.715 but very poor HD95 ($\sim$38), indicating coarse front propagation without fine boundary fidelity. On Vessel dataset, which emphasizes thin and filamentary structures, our method attains DSC of 0.7114$\pm$0.0025 and HD95 of 19.35$\pm$1.12. This marks a $\sim$7.3\% DSC gain over Cross-diffusion and a $\sim$43\% HD95 reduction versus NeuralODE. The gains are most evident in HD95, reflecting improved adherence to elongated vascular boundaries. \\
Overall, the toy results show that PDE-only models capture broad dynamics but miss fine structures, while learning-only baselines underperform on topology-sensitive geometries. Our method consistently improves both overlap and boundary metrics with low variance across folds.\\
\textbf{Stability and Sensitivity Analyses.}
We conducted a series of controlled ablations to evaluate the stability and robustness of our cross-diffusion PDE model. 
Varying the integration step size $\Delta t$ while fixing the rollout horizon $T \approx 1.0$ revealed a clear trade-off: smaller steps achieved high Dice ($0.9584$ at $\Delta t=0.1$) and low HD95, whereas large steps led to numerical collapse (Dice $0.7851$ at $\Delta t=0.3$). 
Analysis of the Jacobi solver confirmed diminishing returns beyond 4 iterations, with optimal accuracy obtained at 1--2 iterations and only marginal degradation at higher counts; sweeping the relaxation factor $\omega$ around the default ($\omega=0.8$--1.0) showed minimal sensitivity. 
Resolution scaling highlighted strong performance up to $128 \times 128$, but severe degradation at larger grids, consistent with increased stiffness of the PDE operator. 
Sweeps over cross-diffusion strength $\chi$, tumor carrying capacity $\hat{C}$, and TV regularization $\lambda_{tv}$ all produced stable, near-constant Dice around $0.953$, with slight improvements at moderate regularization. 
Finally, constraining the spatial growth rate ($k_{\max}$) demonstrated a sharp stability boundary: performance peaked at $k_{\max}=2.0$ (Dice $0.9540$), but collapsed when overly permissive ($k_{\max}=10.0$, Dice $0.61$). 
Collectively, these results delineate the regime where the proposed PDE model yields robust, high-fidelity forecasts while exposing failure modes outside stable parameter ranges.\\
\textbf{Ablation Analysis.} To disentangle the contributions of spatial modeling and topology constraints, we conducted a controlled ablation study (Table \ref{tab:ablation}). Removing spatial features led to a substantial performance drop: DSC decreased to $64.82 \pm 4.99$ and HD95 increased to $40.27 \pm 2.22$. Introducing spatial dynamics (without Growth CNN) yielded the largest gain, improving Dice to $70.53 \pm 0.45$ and reducing HD95 by nearly half ($20.23 \pm 0.29$), showcasing the importance of spatially aware growth modeling. Adding the topology loss maintained competitive Dice ($70.50 \pm 7.64$) while preserving boundary coherence, as reflected in a moderate HD95 ($20.23 \pm 0.30$).
We observe that combining Growth CNN along with the spatial terms and topology loss achieved the best performance (DSC: 95.7$\pm$0.3 and HD95: 1.56$\pm$0.14).

\begin{table}[t]
\centering
\caption{Ablation study on spatial regularizer, growth CNN, and topology loss.}
\begin{tabular}{ccc|cc}
\hline
\textbf{Spatial} & \textbf{Growth CNN} & \textbf{Topology} & 
\textbf{DSC ($\mu\pm\sigma$)} & \textbf{HD95 ($\mu\pm\sigma$)} \\
\hline
\textcolor{red}{\xmark} & \textcolor{green}{\cmark} & \textcolor{red}{\xmark}  & 64.82 $\pm$ 4.99 & 40.27 $\pm$ 2.22 \\
\textcolor{green}{\cmark} & \textcolor{red}{\xmark} & \textcolor{red}{\xmark}  & 70.53 $\pm$ 0.45 & 20.23 $\pm$ 0.29 \\
\textcolor{green}{\cmark} & \textcolor{red}{\xmark} & \textcolor{green}{\cmark}   & 70.50 $\pm$ 0.44 & 20.23 $\pm$ 0.30 \\
\hline
\textcolor{green}{\cmark} & \textcolor{green}{\cmark} & \textcolor{green}{\cmark}   & \textbf{95.70 $\pm$ 0.30} & \textbf{1.56 $\pm$ 0.14} \\
\hline
\end{tabular}
\label{tab:ablation}
\end{table}


\subsection{Discussions}
Our findings show that integrating physics-informed modeling with data-driven learning consistently improves both segmentation accuracy and stability. On synthetic benchmarks, our method outperformed PDE-only baselines, which captured coarse dynamics but failed at fine boundaries, and deep learning models, which struggled on topology-sensitive structures. Sensitivity analyses further confirmed that our PDE backbone remains robust across solver configurations, with clear stability limits at extreme parameter ranges. On the clinical UCSF dataset, our approach achieved the highest DSC and lowest HD95 with reduced variance, corroborating its robustness and clinical applicability. 
Clinically, improved boundary fidelity can directly reduce uncertainty in radiotherapy margin definition and surgical planning, thereby enhancing treatment precision. The findings demonstrate that our framework exhibits strong robustness and holds significant promise for translation into clinical practice.

\section{Related Work}
\textbf{Generative Modeling in Medical Imaging.}
Deep generative models such as GANs, VAEs, and diffusion processes have become standard tools for medical image synthesis, translation, and augmentation \citep{isola2017image,kingma2016improved,ho2020denoising,dhariwal2021diffusion}. Several works extend these methods to clinical contexts, including modality transfer and reconstruction \citep{chartsias2017multimodal, armanious2020medgan}, and more recently, diffusion-based synthesis \citep{ozbey2023unsupervised,kim2024adaptive}. Longitudinal prediction—generating a follow-up exam from a baseline—has been pursued with conditional generative models or deformation fields derived from registration networks \citep{jie2016temporally,qin2019unsupervised}. However, these approaches typically emphasize pixel fidelity rather than anatomical validity; they may produce overlapping tissues, disconnected lobes, or topologically implausible vessels. A few recent works use topological constraints for guiding diffusion models~\citep{bhattacharya2025brainmrdiff,gupta2024topodiffusionnet,xu2025topocellgen}. Our work differs by embedding biological priors directly in the generative process: tissues are constrained to evolve within a cross-diffusion PDE on the probability simplex, with topology regularized explicitly via persistent homology.\\
\textbf{Digital Twins in Oncology.} DT technology in oncology has gained prominence for its potential to model tumor growth~\cite{enderling2014mathematical,oden2010general} and predict treatment responses~\cite{lal2020development,chaudhuri2023predictive,sun2023digital}. Early models utilized cellular automata~\cite{mallet2006cellular,moreira2002cellular} and reaction-diffusion equations~\cite{weis2015predicting,gatenby1996reaction,konukoglu2009image} to simulate tumor proliferation and response to therapies. Recent advancements have integrated these mechanistic models with machine learning techniques, enabling more accurate predictions by incorporating longitudinal imaging and multi-omics data. For instance, the TumorTwin~\cite{kapteyn2025tumortwin} framework offers a modular approach to constructing patient-specific DTs, facilitating the simulation of tumor dynamics and treatment outcomes. Additionally, predictive DTs have been developed to optimize radiotherapy planning by accounting for spatially varying tumor characteristics and treatment uncertainties~\cite{chaudhuri2023predictive}. Despite these advancements, challenges remain in low data resources and complex treatment modeling pipelines,
which are critical for the clinical application of DTs in oncology.

\section{Conclusion}
We proposed a principled framework for anatomy-to-anatomy evolution that unites cross-diffusion PDE dynamics, differentiable IMEX solvers, and topology-preserving regularization within a deep learning setting. By formulating tissue evolution as a probability field constrained to the simplex, our approach enforces exclusivity and inter-class competition, while persistent homology ensures global structural validity. Theoretical guarantees of weak-solution existence, unconditional stability, and simplex invariance distinguish our method from conventional generative models. Anatomy-DT's PDE–deep learning formulation offers a mathematically grounded path toward clinically meaningful digital twins that simulate disease trajectories in a stable, interpretable, and topologically consistent manner.

\noindent\textbf{Acknowledgements.} This research was partially supported by National Institutes of Health (NIH) and National Cancer Institute (NCI) grants 1R21CA258493-01A1, 1R01CA297843-01, 3R21CA258493-02S1, 1R03DE033489-01A1, and National Science Foundation (NSF) grant 2442053. The content is solely the responsibility of the authors and does not necessarily represent the official views of the National Institutes of Health.

\bibliography{main}
\bibliographystyle{iclr2026_conference}

\appendix

\clearpage
\section*{Appendix} Here, we provide additional details on datasets, additional results and theoretical proofs.
\section{Datasets}
\label{sec:appendix_datasets}
For quantitative comparisons and stress-testing, we synthesize two datasets namely Voronoi and Vessel. Details of these datasets are provided here:\\
\textbf{Voronoi.} The Voronoi dataset generates multi-class smooth, organ-like regions using soft Voronoi partitions, where each class is formed by placing random sites and assigning pixels based on distance (shown in Figure~\ref{fig:appendix_dataset}.A). This creates natural-looking multi-organ structures with smooth boundaries. An additional tumor class is seeded near the center and modeled as a Gaussian blob, which expands at the second timepoint while slightly displacing its neighboring tissue. The dataset is well suited for testing models on smooth anatomical boundaries and tumor–organ interactions.\\
\textbf{Vessel.} The Vessel dataset focuses on vascular structures. It first builds a branching random-walk skeleton to mimic a vessel tree, then dilates it to obtain thickened vessels (shown in Figure~\ref{fig:appendix_dataset}.B). Surrounding regions are filled with soft Voronoi partitions to represent lobes, and a tumor is seeded near a vessel, reflecting biological tendencies of tumor growth along vasculature. Between the two timepoints, vessels thicken and the tumor grows preferentially along vessel proximity, creating realistic dynamics. Together, Voronoi and VesselTree datasets provide complementary challenges: one emphasizes smooth organ partitions, while the other emphasizes branching topology and vessel-guided tumor growth.

\begin{figure}[h]
    \centering
    \includegraphics[width=1\linewidth]{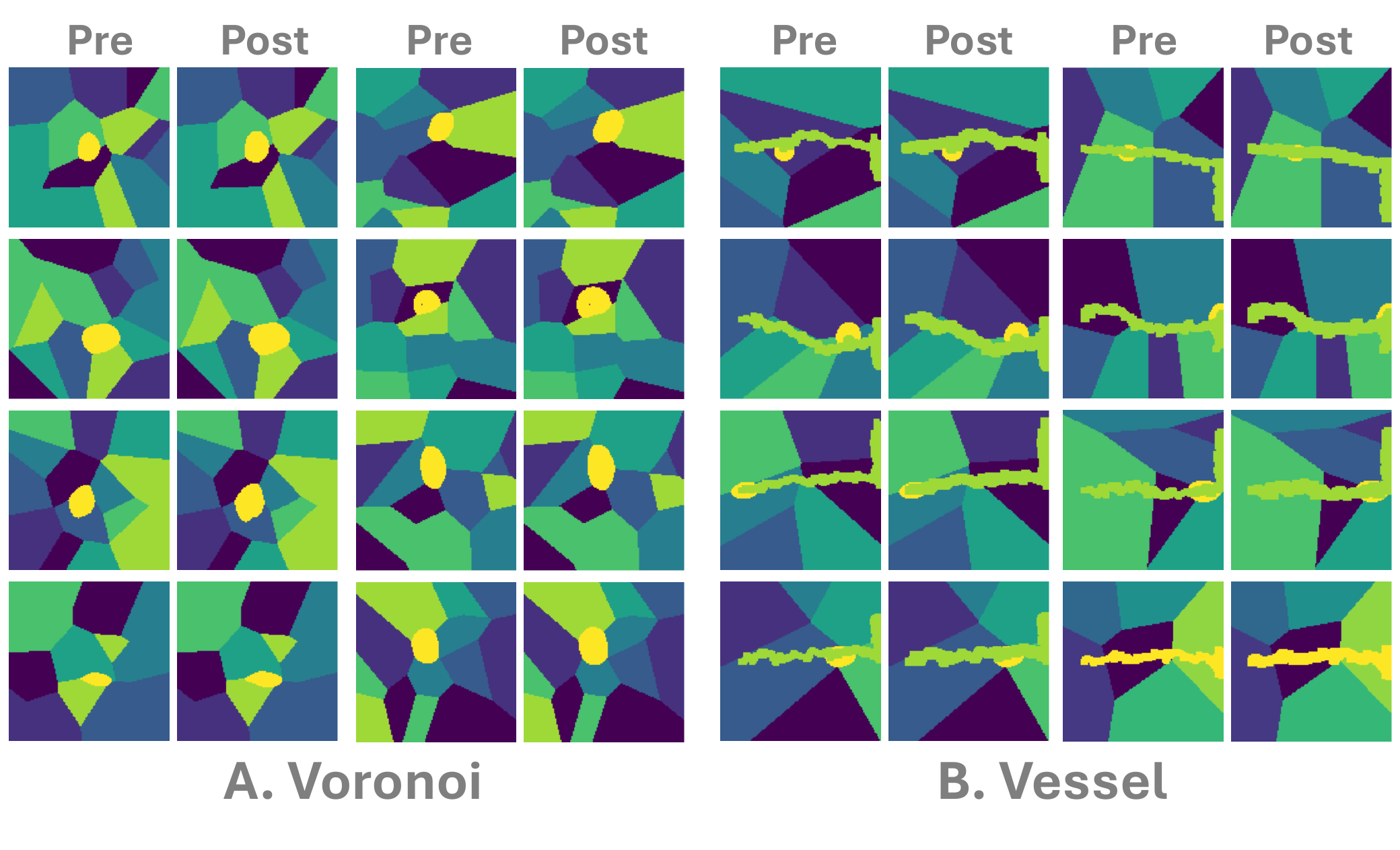}
    \caption{Additional examples of the toy datasets: Voronoi (A) and Vessel (B).}
    \label{fig:appendix_dataset}
\end{figure}

\section{Additional results}
\textbf{Sensitivity analysis.} The DSC and HD95 scores are reported for different Anatomy-DT parameters in Figure~\ref{fig:appendix_sensitivity}.\\ 
\textbf{Qualitative results.} Additional results are shown Figure~\ref{fig:appendix_qualitative}.

\begin{figure}[h]
    \centering
    \includegraphics[width=1\linewidth]{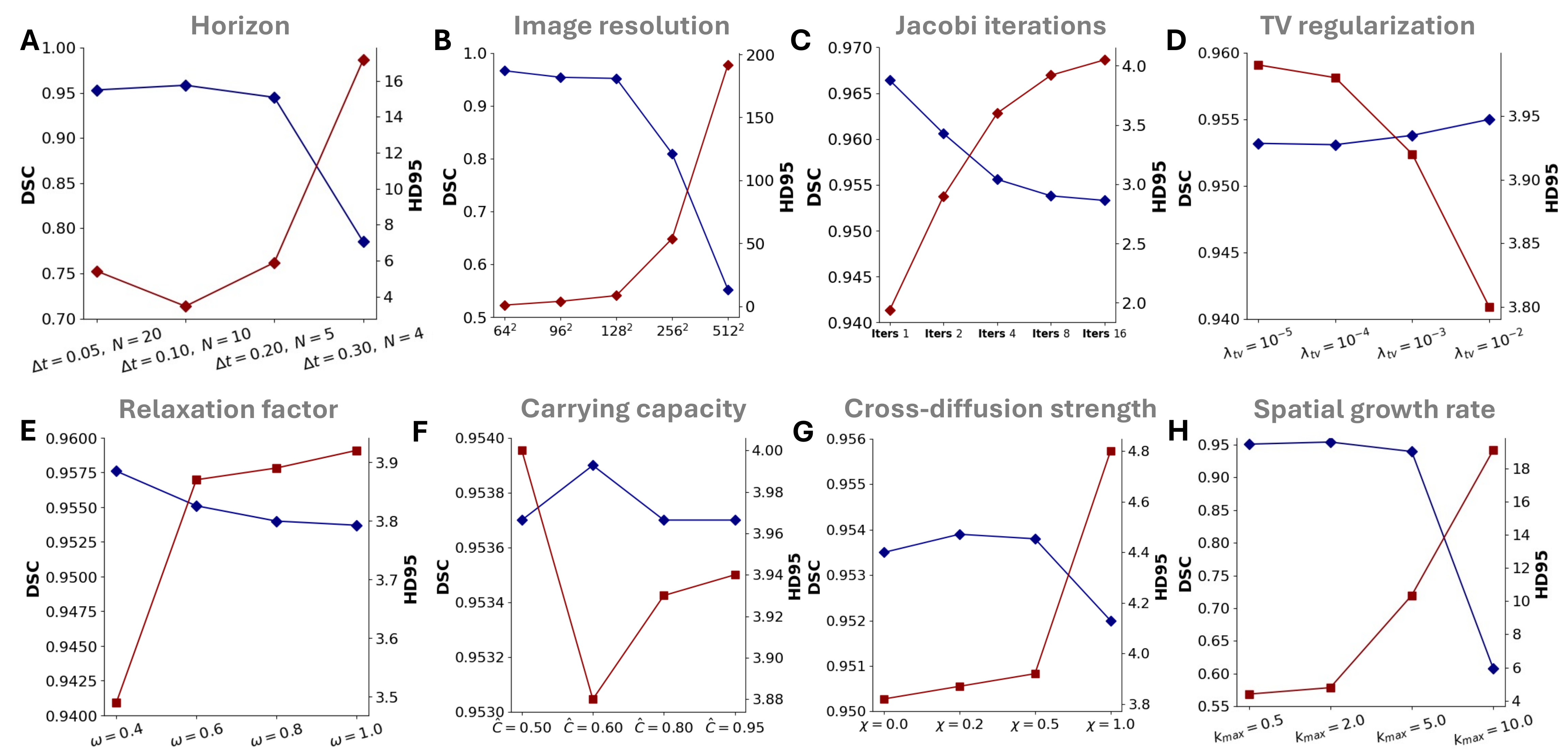}
    \caption{Sensitivity analysis.}
    \label{fig:appendix_sensitivity}
\end{figure}

\begin{figure}
    \centering
    \includegraphics[width=1\linewidth]{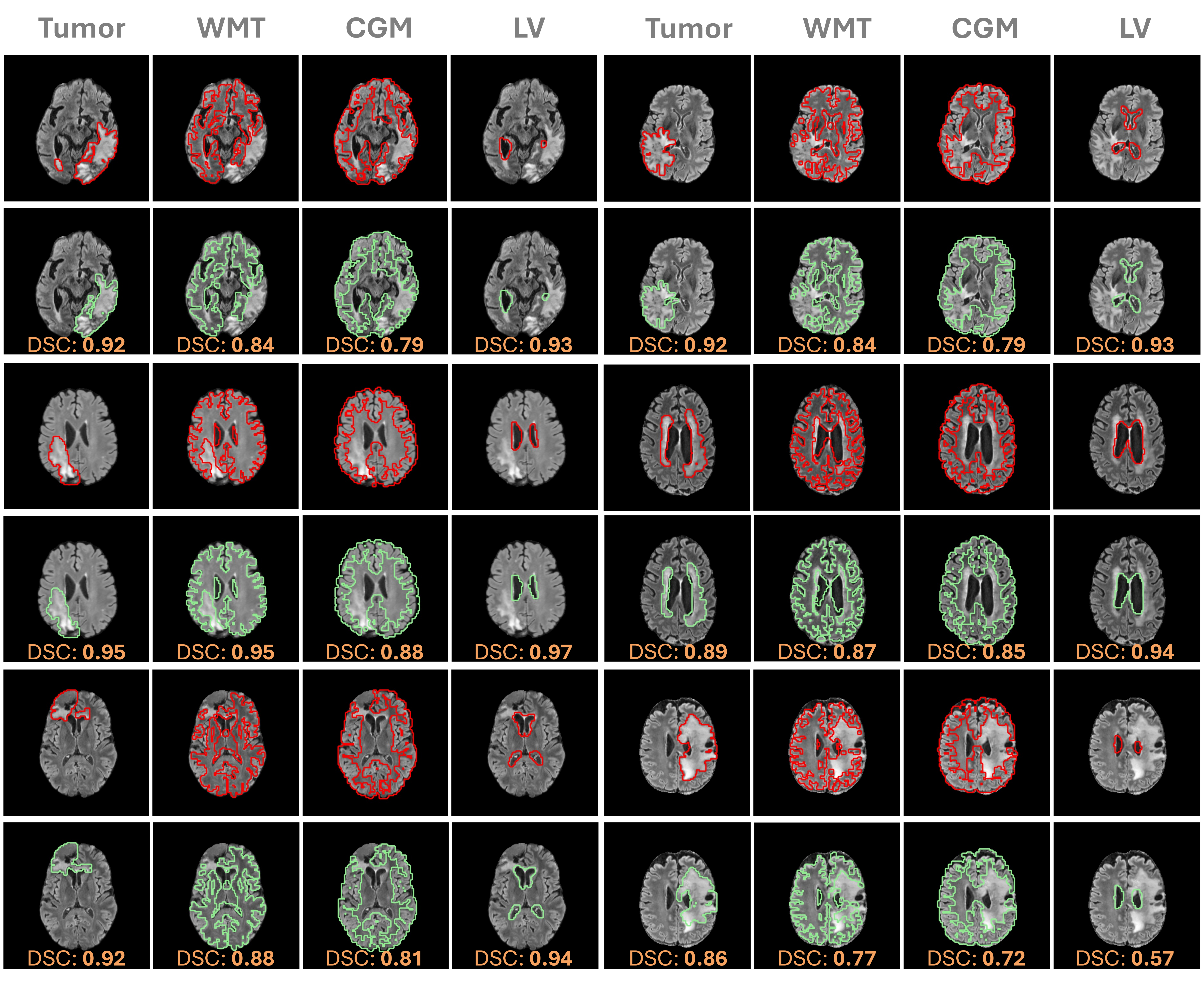}
    \caption{Additional qualitative results.}
    \label{fig:appendix_qualitative}
\end{figure}

\section{Theoretical Guarantees}
\label{sec:appendix_theoretical}
\begin{theorem}[Existence and Uniqueness]
Under standard assumptions on diffusion ($D_k$ coercive), cross-diffusion ($\chi_{kj}$ bounded), and Lipschitz reaction terms $r_k$, the PDE system in \eqref{eq:pde} admits a unique weak solution $p(x,t)$ on $[0,T]$. 
\end{theorem}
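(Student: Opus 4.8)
The plan is to build the weak solution by a Galerkin approximation together with energy estimates and an Aubin--Lions compactness argument, and to establish uniqueness through a Gronwall estimate on the difference of two solutions. First I would pass to the variational formulation of \eqref{eq:pde}: test with $\phi_k\in H^1(\Omega)$, impose no-flux (Neumann) conditions on $\partial\Omega$, and integrate the divergence by parts so that the cross-diffusion contribution becomes $\int_\Omega\sum_{j\neq k}\chi_{kj}(x)\,p_k\,\nabla p_j\cdot\nabla\phi_k\,dx$, involving only first derivatives of $p_j$. The structural point I would exploit is that the principal part is block-diagonal --- each class is diffused by its own coercive $D_k(x)$ --- so the system is a diagonal uniformly parabolic operator perturbed by first-order couplings whose coefficients contain the simplex-bounded factor $p_k\in[0,1]$. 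To ensure the coefficients are globally controlled before the simplex bound is available, I would first replace $p_k$ inside $\chi_{kj}$ by its truncation $\Pi(p_k)=\min\{\max\{p_k,0\},1\}$, solve the truncated system, and remove the truncation afterwards using the $L^\infty$/simplex-invariance property established elsewhere in the paper.

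Next I would set up the Galerkin scheme on the span of the first $n$ Neumann--Laplacian eigenfunctions; the coefficient ODEs are locally solvable since $r_k$ is Lipschitz and the truncated coefficients are bounded. The central a priori bound comes from testing with $p_k^n$ and summing over $k$: coercivity produces $\sum_k\theta\|\nabla p_k^n\|_{L^2}^2$, the reaction and intervention terms are handled by the Lipschitz bound and Cauchy--Schwarz, and the cross-diffusion terms $\int\chi_{kj}\,p_k^n\,\nabla p_j^n\cdot\nabla p_k^n$ are controlled by Young's inequality using $\|\chi_{kj}\|_\infty$ and $|p_k^n|\le1$, then absorbed into the coercive term (this absorption needs either that the cross-diffusion strengths are small relative to the ellipticity floor, or the entropy/boundedness-by-entropy structure of the cross-diffusion system --- one of these belongs to the ``standard assumptions''). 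Gronwall then gives uniform bounds for $p^n$ in $L^\infty(0,T;L^2(\Omega))\cap L^2(0,T;H^1(\Omega))$, and testing against $H^1$ yields a uniform bound for $\partial_t p^n$ in $L^2(0,T;(H^1(\Omega))^\ast)$. Aubin--Lions--Simon supplies a subsequence converging strongly in $L^2(0,T;L^2(\Omega))$ and a.e., weakly in $L^2(0,T;H^1)$, with $\partial_t p^n\rightharpoonup\partial_t p$. Passing to the limit is immediate for the linear terms and for $r_k(p^n)$ (Lipschitz continuity plus strong $L^2$ convergence), and the cross-diffusion term passes because the bounded factor $p_k^n$ converges a.e.\ and hence strongly in $L^2$ while $\nabla p_j^n$ converges weakly in $L^2$; this identifies $p$ as a weak solution on $[0,T]$, and removing the truncation uses $0\le p_k\le1$.

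For uniqueness I would take two weak solutions $p,\tilde p$ with identical data, subtract, and test with $w_k=p_k-\tilde p_k$. Coercivity again gives $\sum_k\theta\|\nabla w_k\|_{L^2}^2$; the reaction difference is Lipschitz-controlled; and the cross-diffusion difference splits as $\chi_{kj}\big[(p_k-\tilde p_k)\nabla p_j+\tilde p_k\nabla(p_j-\tilde p_j)\big]$, whose second summand has the bounded coefficient $\tilde p_k$ times $\nabla w_j$ and is absorbed by Young. The delicate term is $\int\chi_{kj}\,(p_k-\tilde p_k)\,\nabla p_j\cdot\nabla w_k$, which pairs the mere $L^2$ difference $w_k$ with the solution gradient $\nabla p_j$; I would close it by invoking the parabolic-regularity upgrade available because the principal part is diagonal and uniformly parabolic, namely $\nabla p_j\in L^q(\Omega\times(0,T))$ for some $q>2$ (e.g.\ $q=D+2$ under mild smoothness of $D_k$ and the data), so that after a Gagliardo--Nirenberg interpolation this term is bounded by $\tfrac{\theta}{4}\|\nabla w_k\|_{L^2}^2+C\|w_k\|_{L^2}^2$; Gronwall with $w(\cdot,0)=0$ then forces $w\equiv0$. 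I expect the genuine obstacle to be exactly the absence of symmetry/positive-definiteness of the effective diffusion operator: it is what makes closing the a priori estimate require a smallness or entropy hypothesis on $\chi$, and what makes uniqueness require the extra integrability of $\nabla p$ --- so the cleanest presentation folds both requirements into the ``standard assumptions'' on $D_k$, $\chi_{kj}$ and the data.
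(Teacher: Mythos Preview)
The paper does not actually supply a proof of this theorem: in the appendix it is stated as Theorem~1 with no accompanying \texttt{proof} environment (the only proof in that section belongs to the Topology Preservation theorem). So there is nothing to compare your argument against; the authors simply assert the result under ``standard assumptions'' and leave it at that.

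On its own merits, your plan is the correct and standard one for quasilinear cross-diffusion systems: Galerkin approximation, uniform energy bounds in $L^\infty(0,T;L^2)\cap L^2(0,T;H^1)$, a dual bound on $\partial_t p^n$, Aubin--Lions compactness, and passage to the limit using strong $L^2$ convergence of the bounded factor against weak convergence of the gradient. You are also right to flag the two genuine obstructions --- closing the a priori estimate when the cross-diffusion matrix lacks positive-definiteness (hence the need for either a smallness condition on $\chi$ relative to the ellipticity floor of $D_k$, or an entropy structure in the sense of J\"ungel's boundedness-by-entropy framework), and the need for higher gradient integrability $\nabla p\in L^q$, $q>2$, to close the uniqueness Gronwall. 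Your truncation device $\Pi(p_k)$ to secure global Lipschitz coefficients before the simplex bound is available is the standard workaround, and you correctly defer its removal to the paper's separate simplex-invariance lemma. In short, your sketch is more detailed and more honest about the hypotheses actually needed than the paper itself, which buries everything in the phrase ``standard assumptions.''
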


\begin{theorem}[Stability of IMEX Scheme]
The implicit treatment of diffusion and cross-diffusion ensures unconditional stability with respect to step size $\Delta t$. The explicit treatment of reaction and intervention terms preserves discontinuities introduced by clinical events. 
\end{theorem}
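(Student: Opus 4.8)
The plan is to recast the claim in the standard method-of-lines framework and then treat the two assertions separately: (i) unconditional ($A$)-stability of the stiff sub-step, yielding iterates bounded uniformly in $\Delta t$; and (ii) a structural statement that the explicit forcing is not regularized within a step. First I would fix the spatial discretization actually used (the convolutional Laplacian / finite differences), so that \eqref{eq:pde} becomes an ODE system $\dot p = L(p)\,p + g(p)$ on $\mathbb{R}^{K|\Omega|}$, where $L(p)$ collects the discrete diffusion and cross-diffusion fluxes and $g$ collects $r_k$ and $s_k$. The IMEX step, with $\Pi_\Delta$ the Euclidean projection onto the simplex, reads
\[
(I-\Delta t\,L^n)\,p^{n+1}=p^n+\Delta t\,g(p^n),\qquad p^{n+1}\leftarrow \Pi_{\Delta}(p^{n+1}).
\]
The key lemma is dissipativity of the discrete stiff operator: $\langle v, L^n v\rangle_W\le 0$ for all $v$ in a suitable weighted inner product $\langle\cdot,\cdot\rangle_W$, assuming $D_k$ coercive and $\chi_{kj}$ bounded (with a smallness/ordering condition on $\chi_{kj}$ relative to $\min_k D_k$). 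Given this, $(I-\Delta t L^n)$ is invertible for every $\Delta t>0$ with $\|(I-\Delta t L^n)^{-1}\|_W\le 1$ — exactly the unconditional $A$-stability of backward Euler on the stiff part, with no CFL restriction.

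Next I would close the energy estimate. Pairing the update with $p^{n+1}$ in $\langle\cdot,\cdot\rangle_W$, using $\langle p^{n+1},L^n p^{n+1}\rangle_W\le 0$ and Cauchy–Schwarz gives $\|p^{n+1}\|_W\le \|p^n\|_W+\Delta t\,\|g(p^n)\|_W$; since on the simplex the logistic reaction $r_k$ is bounded and Lipschitz and the intervention $s_k$ is bounded, a discrete Grönwall argument yields $\|p^n\|_W\le e^{CT}(\|p^0\|_W+CT)$, independent of $\Delta t$. As $\Pi_\Delta$ is nonexpansive, the projection step only improves the estimate and enforces $p^n\in\Delta^{K-1}$ for all $n$. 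I would also account for the inexact implicit solve: the Jacobi iteration matrix for $(I-\Delta t L^n)$ is strictly diagonally dominant (the diagonal carries $1+\Delta t\cdot(\text{positive self-diffusion stencil weight})$), so $m$ sweeps converge geometrically and introduce an $O(\rho^m)$ perturbation absorbed into the Grönwall constant; hence unrolling $m$ Jacobi steps preserves unconditional stability (and differentiability).

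For the second assertion I would argue structurally rather than quantitatively. In the split, $s_k(\cdot,t^n)$ enters the right-hand side additively and is \emph{not} passed through the implicit elliptic solve at its own time level: a resection or radiotherapy event, modeled as a sharp spatial mask, updates $p^{n+1}$ by $\Delta t\,s_k$ before any implicit coupling to neighbors, so a jump in $s_k$ transfers to a jump in $p^{n+1}$ of the same order, the only regularization being the single-step diffusive kernel of width $O(\sqrt{\Delta t\,\|D\|})\to 0$. By contrast a fully implicit treatment would solve $(I-\Delta t L^n)p^{n+1}=p^n+\Delta t\,s_k$ with $s_k$ inside the elliptic solve, averaging the event across the stencil; making the comparison precise (e.g.\ a lower bound on the jump height or total variation of $p^{n+1}$ in the explicit case) completes the claim.

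I expect the main obstacle to be establishing dissipativity of the cross-diffusion operator $L^n$: cross-diffusion matrices $(D_{ij}(u))$ are generically non-symmetric and indefinite, which is precisely why these systems lack a naive maximum principle and why the weak-solution existence theory above leans on an entropy structure. The realistic route is to either (a) assume $(D_{ij}(u))$ is uniformly positive definite on the simplex — a genuine restriction on $\chi_{kj}$ versus $D_k$ that should be stated explicitly — or (b) exhibit a symmetrizing weight $W=W(u)$, the discrete analogue of the Shigesada–Kawasaki–Teramoto entropy, for which $W L^n$ is negative semidefinite. A secondary subtlety is that freezing $L^n$ at $p^n$ makes the scheme linearized-implicit (Rosenbrock-type), so to be rigorous I would keep the operator nonlinear and replace $\langle v,Lv\rangle\le0$ by a one-sided Lipschitz (monotonicity) condition $\langle F_{\text{stiff}}(p)-F_{\text{stiff}}(q),p-q\rangle\le 0$, which is what actually drives contraction of the implicit solve.
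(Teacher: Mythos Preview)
The paper does not actually prove this theorem. In Appendix~\ref{sec:appendix_theoretical} the statement is listed alongside the existence/uniqueness theorem and the feasibility lemma, but only the Topology Preservation theorem is given a proof; Theorem~2 is asserted without argument. So there is no ``paper's own proof'' to compare against.

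Your proposal therefore goes well beyond what the paper supplies. The skeleton you outline --- method-of-lines, dissipativity $\langle v, L^n v\rangle_W \le 0$ of the discrete stiff operator, backward-Euler $A$-stability giving $\|(I-\Delta t L^n)^{-1}\|_W\le 1$, boundedness of the explicit forcing on the simplex, discrete Gr\"onwall, nonexpansiveness of $\Pi_\Delta$, and a perturbation argument for finitely many Jacobi sweeps --- is the standard and correct route for this kind of claim, and is more rigorous than anything in the manuscript.

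You also correctly identify the genuine weak point that the paper's bare statement hides: unconditional stability of the implicit step hinges on the discrete cross-diffusion operator being dissipative (or one-sided Lipschitz), and for general $\chi_{kj}$ this is \emph{not} automatic. Your two proposed fixes --- either an explicit smallness condition on $\chi_{kj}$ relative to $\min_k D_k$, or an SKT-type entropy symmetrizer $W(u)$ --- are exactly the options the literature uses, and either would need to be stated as an additional hypothesis for the theorem to be true as written. Likewise, your treatment of the ``discontinuity preservation'' clause as a structural/qualitative statement (explicit forcing is not passed through the elliptic solve at its own time level) is the honest reading; the paper offers no sharper formulation, so there is nothing more to match.
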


\begin{lemma}[Feasibility and Conservation]
Projection to the simplex $\Delta^{K-1}$ guarantees $p_k(x,t)\ge 0$ and $\sum_{k=1}^K p_k(x,t)=1$ for all $t$. Thus, the solution remains a valid probabilistic anatomical state throughout integration. 
\end{lemma}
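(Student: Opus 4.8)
The plan is to argue by induction on the IMEX time index, the only substantive ingredient being the elementary geometry of Euclidean projection onto a closed convex set. First I would make the projection step precise: for $z \in \R^K$ set $\Pi_{\Delta^{K-1}}(z) = \argmin_{y \in \Delta^{K-1}} \|y - z\|_2^2$, where $\Delta^{K-1} = \{ y \in \R^K : y_k \ge 0,\ \sum_{k=1}^K y_k = 1\}$. Since this set is nonempty, closed, and convex, the Hilbert projection theorem yields existence and uniqueness of the minimiser, so $\Pi_{\Delta^{K-1}}$ is a well-defined, $1$-Lipschitz map $\R^K \to \Delta^{K-1}$. The key point is tautological once stated: the image of $\Pi_{\Delta^{K-1}}$ lies in $\Delta^{K-1}$ by construction, so for every $z$ the vector $\Pi_{\Delta^{K-1}}(z)$ satisfies both $\Pi_{\Delta^{K-1}}(z)_k \ge 0$ and $\sum_k \Pi_{\Delta^{K-1}}(z)_k = 1$ exactly, not merely approximately. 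In the solver this map is applied independently at each voxel $x \in \Omega$ to the post-update state; measurability of $x \mapsto \Pi_{\Delta^{K-1}}(\tilde p^{n+1}(x))$ follows from continuity of $\Pi_{\Delta^{K-1}}$ composed with the measurable intermediate field $\tilde p^{n+1}$.

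Next I would run the induction. Base case: the baseline state $p^0 = p(\cdot,0)$ is a probability field by the State Representation definition, hence $p^0(x) \in \Delta^{K-1}$ for a.e. $x$. Inductive step: assume $p^n(x) \in \Delta^{K-1}$. One IMEX sub-step — the implicit (truncated-Jacobi) diffusion and cross-diffusion solve, followed by the explicit reaction and intervention update — produces an intermediate field $\tilde p^{n+1}(x)$ that need not lie on the simplex, since the stiff solve can introduce small negative components and the reaction/treatment terms generically perturb the mass $\sum_k \tilde p_k^{n+1}$ away from $1$. Setting $p^{n+1}(x) := \Pi_{\Delta^{K-1}}(\tilde p^{n+1}(x))$ then returns, by the previous paragraph, a point of $\Delta^{K-1}$; hence $p^{n+1}(x) \in \Delta^{K-1}$ for a.e. $x$. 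By induction, $p(x,t_n) \in \Delta^{K-1}$ at every grid time $t_n = n\Delta t$, $0 \le n \le T/\Delta t$.

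Finally I would pass to all $t \in [0,T]$: the reported trajectory $p(x,t)$ between grid points is obtained by piecewise-constant or piecewise-linear interpolation of the projected iterates, and convexity of $\Delta^{K-1}$ ensures that any convex combination of simplex-valued states is again simplex-valued, so $p(x,t) \in \Delta^{K-1}$ for all $t$. This gives $p_k(x,t) \ge 0$ and $\sum_{k=1}^K p_k(x,t) = 1$ throughout integration, which is the claim; the stated consequence that every voxel admits a well-defined dominant class at inference is then immediate by taking $\argmax_k p_k(x,t)$.

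The "hard part," such as it is, is concentrated entirely in the first paragraph — justifying existence, uniqueness, and Lipschitz continuity of the simplex projection and the measurability of its voxelwise application; everything after is a two-line induction plus a convexity remark. One design caveat worth noting in the write-up: realising the projection through a temperature-scaled softmax would also preserve feasibility, but only onto the relative interior of $\Delta^{K-1}$, so I would retain the exact Euclidean projection (computable in $O(K\log K)$ per voxel) precisely because it can attain the boundary and thus represent sharp, mutually exclusive tissue assignments.
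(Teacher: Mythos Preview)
Your proposal is correct and in fact considerably more detailed than what the paper provides: the paper states this lemma without proof, treating it as an immediate consequence of the definition of simplex projection. Your argument supplies precisely the rigour the paper omits --- well-definedness of $\Pi_{\Delta^{K-1}}$ via the Hilbert projection theorem, the tautological inclusion of its image in $\Delta^{K-1}$, the induction on IMEX steps, and the convexity argument for interpolated times --- all of which are sound. The additional remarks on measurability, the $1$-Lipschitz property, and the softmax-versus-exact-projection design note go beyond what the lemma strictly requires but are accurate and would strengthen the exposition; if anything, the write-up could be shortened by collapsing the induction (which is genuinely a one-liner once the projection is in place) and omitting the measurability aside, since the paper works on a discrete voxel grid rather than a continuum domain.
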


\begin{theorem}[Topology Preservation via ATL]
Let $\mathcal{L}_{\mathrm{ATL}}(p)=\lambda_1\sum_{k\in\mathcal{K}_{\mathrm{cl}}}\mathcal{L}_{\mathrm{clDice}}(p_k,q_k)+\lambda_2\,\mathcal{L}_{\mathrm{overlap}}(p)$ with $\lambda_1,\lambda_2>0$.
Assume $S(\cdot)$ is the soft skeletonization used in clDice and satisfies the standard properties in \cite{shi2024centerline}: continuity, idempotence on 1-pixel wide skeleta, and morphological thinning consistency.
Then any minimizer $p^\star$ of $\mathcal{L}_{\mathrm{ATL}}$ over feasible probability fields satisfies:
(i) for each $k\in\mathcal{K}_{\mathrm{cl}}$, the predicted and reference skeleta coincide (connectivity/homotopy preserved), and
(ii) $p_i^\star(x)\,p_j^\star(x)=0$ a.e. for all $i\neq j$ (mutual exclusivity).
\end{theorem}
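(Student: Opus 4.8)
The loss $\mathcal{L}_{\mathrm{ATL}}$ is a conic combination, with strictly positive weights $\lambda_1,\lambda_2$, of terms that I will first show are each nonnegative and vanish exactly on a topology/exclusivity-compliant configuration. The whole argument then reduces to three steps: (a) exhibit the infimum $0$, attained at the reference partition, so that every minimizer $p^\star$ has $\mathcal{L}_{\mathrm{ATL}}(p^\star)=0$; (b) split $0=\mathcal{L}_{\mathrm{ATL}}(p^\star)$ into the simultaneous vanishing of each summand; (c) read off conclusions (ii) and (i) from the vanishing of the overlap and clDice terms respectively. Throughout I work with the standing (and standard) hypothesis that the reference $q$ is itself a feasible hard partition with disjoint supports $Q_k:=\{x: q_k(x)=1\}$, and I treat the clDice smoothing $\varepsilon$ in the $\varepsilon\downarrow 0$ limit, as is customary in the analysis of this loss.

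\textbf{Step (a)–(b): reduction to vanishing of each term.} Since $p_k,q_k\le 1$ pointwise, the ratios satisfy $\mathbb{X}(p_k,q_k),\mathbb{Y}(p_k,q_k)\in[0,1]$, whence $\tfrac{2\mathbb{X}\mathbb{Y}}{\mathbb{X}+\mathbb{Y}+\varepsilon}\le 1$ and $\mathcal{L}_{\mathrm{clDice}}\ge 0$; likewise each $\mathbb{E}_x[p_ip_j]\ge 0$ so $\mathcal{L}_{\mathrm{overlap}}\ge 0$. Evaluating at $p=q$: $\mathcal{L}_{\mathrm{overlap}}(q)=0$ because the $Q_k$ are disjoint, and since the morphological skeleton satisfies $S(q_k)\subseteq Q_k$ we get $\langle S(q_k),q_k\rangle=\langle S(q_k),\mathbf{1}\rangle$, hence $\mathbb{X}(q_k,q_k)=\mathbb{Y}(q_k,q_k)=1$ and $\mathcal{L}_{\mathrm{clDice}}(q_k,q_k)\to 0$. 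Therefore $\inf\mathcal{L}_{\mathrm{ATL}}=0$, so any minimizer $p^\star$ has $\mathcal{L}_{\mathrm{ATL}}(p^\star)=0$; being a positive combination of nonnegative quantities, this forces $\mathcal{L}_{\mathrm{clDice}}(p_k^\star,q_k)=0$ for every $k\in\mathcal{K}_{\mathrm{cl}}$ and $\mathcal{L}_{\mathrm{overlap}}(p^\star)=0$.

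\textbf{Step (c): deriving (ii) then (i).} From $\mathcal{L}_{\mathrm{overlap}}(p^\star)=0$ and nonnegativity of each summand, $\mathbb{E}_x[p_i^\star p_j^\star]=0$, hence $p_i^\star(x)p_j^\star(x)=0$ a.e.\ for all $i\ne j$; combined with the simplex constraint $\sum_k p_k^\star(x)=1$, at a.e.\ $x$ exactly one coordinate equals $1$, so $p^\star$ is a.e.\ a vertex of $\Delta^{K-1}$, i.e.\ a genuine hard partition $\Omega_k^\star:=\{x:p_k^\star(x)=1\}$ — this is (ii). For (i), use the elementary fact that $\tfrac{2ab}{a+b}=1$ for $a,b\in[0,1]$ only if $a=b=1$ (AM--HM), so $\mathcal{L}_{\mathrm{clDice}}(p_k^\star,q_k)=0$ gives $\mathbb{X}(p_k^\star,q_k)=\mathbb{Y}(p_k^\star,q_k)=1$. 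Then $\langle S(p_k^\star),\mathbf{1}-q_k\rangle=0$ with both factors nonnegative yields the topological precision inclusion $S(p_k^\star)\subseteq Q_k$, and symmetrically $\langle S(q_k),\mathbf{1}-p_k^\star\rangle=0$ yields topological recall $S(q_k)\subseteq\Omega_k^\star$. Applying $S$ to the recall inclusion and using idempotence of $S$ on the already $1$-pixel-wide set $S(q_k)$ together with thinning consistency gives $S(q_k)=S(S(q_k))\subseteq S(\Omega_k^\star)=S(p_k^\star)$; symmetrically $S(p_k^\star)\subseteq S(q_k)$; hence $S(p_k^\star)=S(q_k)$. Since the morphological skeleton deformation-retracts onto its region, $\Omega_k^\star\simeq S(p_k^\star)=S(q_k)\simeq Q_k$, so connectivity and all higher homotopy (e.g.\ vessel cycles, ventricular connectivity) are preserved, giving (i).

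\textbf{Main obstacle.} The delicate point is the last paragraph's passage from the two one-sided skeleton inclusions to equality of skeleta and thence to homotopy equivalence of the full regions. Morphological skeletonization is not monotone in general, so the inference ``$S(q_k)\subseteq\Omega_k^\star\Rightarrow S(q_k)\subseteq S(\Omega_k^\star)$'' is exactly where the assumed continuity, idempotence, and thinning-consistency properties of $S$ from \cite{shi2024centerline} are load-bearing; a fully rigorous version needs the precise formal statements of those properties rather than the informal summary used here. Two secondary caveats: with $\varepsilon>0$ fixed the minimum of $\mathcal{L}_{\mathrm{ATL}}$ is $\Theta(\varepsilon)$ rather than exactly $0$, so the ``vanishing'' argument should be read in the $\varepsilon\downarrow 0$ limit (or recast as an approximate-minimizer statement); and the reduction in Step (a) relies on $q$ being a feasible disjoint hard partition, failing which one would instead have to analyze the minimizers of $\mathcal{L}_{\mathrm{overlap}}$ directly and argue topology preservation relative to $\mathrm{argmax}$-hardened reference regions.
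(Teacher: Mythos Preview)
Your proof follows essentially the same approach as the paper's: establish nonnegativity of each summand, deduce that any minimizer drives every term to zero, and then extract (ii) from $\mathcal{L}_{\mathrm{overlap}}=0$ and (i) from $\mathcal{L}_{\mathrm{clDice}}=0$ via the two skeleton-inclusion conditions $S(p_k^\star)\subseteq q_k$, $S(q_k)\subseteq p_k^\star$. Your version is actually more carefully developed than the paper's brief sketch---you explicitly exhibit the infimum at $p=q$, invoke AM--HM to force $\mathbb{X}=\mathbb{Y}=1$, use the simplex constraint to upgrade (ii) to a genuine hard partition, and correctly flag the skeleton-monotonicity passage and the $\varepsilon>0$ caveat as the load-bearing points that rely on the assumed properties of $S$ from \cite{shi2024centerline}.
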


\begin{proof}
\textbf{(A) clDice term enforces skeleton (connectivity) agreement.}
For a fixed class $k$, clDice is
\[
\mathcal{L}_{\mathrm{clDice}}(p_k,q_k)=1-\frac{2\,\langle S(p_k),q_k\rangle}{\langle S(p_k),\mathbf{1}\rangle+\langle S(q_k),\mathbf{1}\rangle+\varepsilon}\,,
\]
which equals $0$ iff the two directional skeleton precisions/recalls are $1$, i.e., $S(p_k)\subseteq q_k$ and $S(q_k)\subseteq p_k$ up to null sets. Under the assumptions on $S(\cdot)$, this yields equality of skeleta and, by the clDice analysis, preservation of connectivity up to homotopy for binary segmentations in 2D/3D. Thus, for any $\lambda_1>0$, the ATL minimizer must satisfy $S(p_k)=S(q_k)$ for all $k\in\mathcal{K}_{\mathrm{cl}}$. \qedhere
\smallskip

\textbf{(B) Overlap term enforces exclusivity.}
The overlap penalty
\[
\mathcal{L}_{\mathrm{overlap}}(p)=\frac{2}{K(K-1)}\sum_{1\le i<j\le K}\mathbb{E}_x\big[p_i(x)\,p_j(x)\big]
\]
is nonnegative and equals $0$ iff $p_i(x)\,p_j(x)=0$ a.e. for all $i\neq j$. Hence at any minimizer with $\lambda_2>0$ we have pairwise exclusivity almost everywhere. This is consistent with standard exclusion/non-overlap losses used to enforce disjoint organs in multi-class medical segmentation. \qedhere
\smallskip

\textbf{(C) Combined ATL.}
Since both terms are nonnegative, the minimum of $\mathcal{L}_{\mathrm{ATL}}$ is achieved only when each term attains its minimum. Therefore a minimizer $p^\star$ simultaneously satisfies (A) and (B): skeleton (connectivity) agreement for selected classes and mutual exclusivity across all classes. \qed
\end{proof}

\end{document}